\renewcommand*{\thefootnote}{\fnsymbol{footnote}}
\newcommand{\E}{\mathbb{E}}
\newcommand{\Prob}{\mathbb{P}}
\newcommand{\R}{\mathbb{R}}
\newcommand{\dd}{\mathrm{d}}
\renewcommand{\le}{\leqslant}
\renewcommand{\leq}{\leqslant}
\renewcommand{\ge}{\geqslant}
\newcommand{\defin}[1]{\textbf{#1}}
\DeclareMathOperator*{\argmax}{arg\,max}
\DeclareMathOperator*{\argmin}{arg\,min}
\theoremstyle{definition}
\newtheorem{theorem}{Theorem}
\newtheorem{lemma}{Lemma}
\newtheorem{remark}{Remark}
\begin{document}

\pagenumbering{gobble}

\begin{center}
\LARGE{\textbf{ Convolution Mode Regression}}
\\
\vspace{1.5cm}          
\href{https://drive.google.com/file/d/1g_ZGEUWOutLVQHKLoTQPZguehHIt2eIQ/view?usp=sharing}{Eduardo Schirmer Finn}\footnote[4]{Corresponding Author, MSc in Statistics at the Federal University of Rio Grande do Sul (UFRGS), Brazil. \href{mailto:eduardosfinn@outlook.com}{eduardosfinn@outlook.com}}, 
\href{https://professor.ufrgs.br/eduardohorta/}{Eduardo Horta}\footnote[2]{Adjoint Professor in the Department of Statistics at the Federal University of Rio Grande do Sul (UFRGS), Brazil. \href{mailto:eduardo.horta@ufrgs.br}{eduardo.horta@ufrgs.br}}
\vspace{1.5cm}
\end{center}

\vspace{1cm}
\large{\textbf{\textcolor{MidnightBlue}{Abstract}}}
\vspace{0.25cm}

For highly skewed or fat-tailed distributions, mean or median-based methods often fail to capture the central tendencies in the data. Despite being a viable alternative, estimating the conditional mode given certain covariates (or mode regression) presents significant challenges. Nonparametric approaches suffer from the "curse of dimensionality", while semiparametric strategies often lead to non-convex optimization problems. In order to avoid these issues, we propose a novel mode regression estimator that relies on an intermediate step of inverting the conditional quantile density. In contrast to  existing approaches, we employ a convolution-type smoothed variant of the quantile regression. Our estimator converges uniformly over the design points of the covariates and, unlike previous quantile-based mode regressions, is uniform with respect to the smoothing bandwidth. Additionally, the \textit{Convolution Mode Regression} is dimension-free, carries no issues regarding optimization and preliminary simulations suggest the estimator is normally distributed in finite samples.

\vspace{0.25cm}

\textit{\textbf{\textcolor{MidnightBlue}{Keywords:}} Mode Regression; Convolution-based Smoothing; Conditional Quantile;  Asymptotic Theory; Uniform Convergence.}

\newpage

\tableofcontents

\newpage

\renewcommand{\thefootnote}{\arabic{footnote}}

\pagenumbering{arabic}
\setcounter{page}{3}

\section{Introduction}

Conventional econometric methods are generally mean-(or median)-based; such methods may fail to express the central tendency if distributions are highly skewed or long-tailed  \citep{Kemp_Silva_2012, Chen_et_al_2016}. The conditional mode emerges as a robust alternative, conveying the desirable interpretation of being the most likely value of a dataset \citep{chacon_2020}. When data is discrete, the conditional sample mode is more straightforward, however, in a continuous variable setting, it can be estimated as the point that maximizes the conditional probability density function (pdf). Since \citet{Lee_1989}, the estimation of conditional mode, called mode regression, has demonstrated its utility across various domains, specially in applications with asymmetric data, such as wages \citep*{Zhang_Kato_Ruppert_2021}; electrical energy consumption \citep*{ota_kato_hara_2019}; medical sciences \citep{wang_et_al_2017}, traffic data \citep{einbeck_tutz_2006} and wildfire burned areas \citep{yao_li_2014}.

In reviewing the conditional mode literature, two fundamental considerations emerge in the discussion: firstly, the assumption of whether the mode is global (unique) or local; secondly, if the estimation strategy employed is semiparametric (linear) or fully nonparametric. Linear approaches require the condition that the mode is a unique maximizer of the conditional density, while nonparametric techniques in turn are usually used for multimodal models.\footnote{Nonparametric estimation is not exclusive to multimodal specifications, since in few cases the global mode is also estimated nonparametrically \citep{sager_thisted_1982, feng_et_al_2020}.} The first semiparametric estimator considering a unique mode was developed by \citet{Lee_1989} and establishes a linear relationship between the mode of the response and its covariates; however, despite being elegant, this model is impractical;\footnote{Has restrictive assumptions on the conditional density of the response and, due to the objective function, the estimator lacks a tractable distribution \citep{Kemp_Silva_2012}.} subsequent models, such as the ones of \citet{Kemp_Silva_2012} and \citet{yao_li_2014}, yield out non-convex optimization problems, resulting in functions that may have multiple maxima; also, the related algorithms are sensible to the starting points. On the other hand, although nonparametric estimation tends to avoid misspecification \citep{yao_et_al_2012, Chen_et_al_2016}, these methods suffer from slow convergence even in moderate dimension sizes. 

Two recent developments \citep*{ota_kato_hara_2019, Zhang_Kato_Ruppert_2021} have addressed the optimization and dimension issues by estimating the mode via a quantile regression (QR). This strategy relies on retrieving the conditional density (whose maximizer is the mode) from the derivative of the conditional quantile function, referred to as the conditional quantile density function (qdf). Similarly to what is done in \citet*{ota_kato_hara_2019} and \citet*{Zhang_Kato_Ruppert_2021}, we propose a novel approach which relies on working with the (convolution-based) smoothed version of the QR estimator of \citet*{fernandes_guerre_horta_2021}. This alternative provides some advantages: it is continuous, unbiased and less variable than the traditional QR estimator.

The main goal of this work is to derive the asymptotic properties for the estimator of the conditional mode via smoothed quantile regression, hereby denominated \textit{Convolution Mode Regression}. Our procedure differs from \citet*{Zhang_Kato_Ruppert_2021}, as they opt for an \textit{"estimate then smooth"} process, using the traditional QR framework;\footnote{Parallel to \citeauthor{parzen_1979}'s \citeyear{parzen_1979} proposition of quantile estimation by smoothing the sample quantile function.} whereas, we prefer to \textit{"smooth then estimate"}, stemming from the smoothed quantile regression estimator.\footnote{Resembling the method of \citet{Nadaraya_1964}, in which the unconditional quantile is estimated via inverting a smoothed estimator of the cdf (cumulative distribution function).} The convergence rate of the \citet*{Zhang_Kato_Ruppert_2021} estimator is $O_P( n^{-1/2}h^{-3/2} \sqrt{\log n} + h^2)$, whilst ours is $O_P( (\frac{\log n} {nh})^{1/4}) + o(h^{1/2})$; both rates dimension-free. Despite \citeauthor*{Zhang_Kato_Ruppert_2021}'s rate, their convergence is uniform solely in the design points of the covariates. In contrast to our estimator, which holds uniformity with respect to the design points, but also for the smoothing bandwidth. In light of this, our estimator can be more adequate in cases where the choice of the bandwidth is data-driven. 

\subsection{Literature Review}
Following the introduction of global mode estimators by \citet{parzen_1962} and \citet{chernoff_1964},\footnote{\citet{sager_1978} categorized estimators as: "naive" when the density was not estimated (Chernoff), and as "indirect", since an intermediate step, such as Kernel Density Estimation, is required (Parzen).  These classifications may blur, as the majority of direct estimators reveal some form of linkage with a type of density estimation \citep{sager_1978, chacon_2020}.} \citet{sager_thisted_1982} generalized the framework of the latter and developed the first mode regression, establishing the global mode of the dependent variable as a monotone function of the covariate. Albeit its application only to ordinal data, this model showed that the conditional mode estimator could be formulated by applying a plug-in from a density estimator, such as a Kernel Density Estimator (KDE). Subsequently, mode regression literature has evolved and can be labeled according to two major factors: firstly, the assumptions of the model, as in a unique global mode vs. multiple local modes; secondly, the type of estimation,  semiparametric (which postulates a linear relationship between response variable and covariates) vs. nonparametric (which allows for multiple local modes). Semiparametric estimation only works with the assumption of the mode being unique, however, a unique mode can be estimated nonparametrically. In light of this, we can divide the mode regression literature into 4 different strands:\footnote{Some papers may not fit precisely in this categorization, since they mix parametric and nonparametric traits \citep{liu_etal_2013, wang_2024}, or use a Bayesian approach \citep{yu_arist_2012}.}
(1) Unique Mode with Semiparametric Estimation; (2) Unique Mode with Nonparametric Estimation, and (3) Multimodal with Nonparametric Estimation; afterward, special attention is given to (4) Conditional Quantile Approaches towards the mode, since this is more related to our contribution.\footnote{For a general survey of the role of the mode in statistics, we recommend \citet{chacon_2020}; especially for mode regression, we indicate the review of \citet{chen_2018}.}

\textbf{(1) Unique Mode \& Semiparametric Estimation:} considering a linear framework, \citeauthor{Lee_1989} (\citeyear{Lee_1989, Lee_1993}) proposed to use a kernel smoothed loss function to estimate the conditional mode of truncated variables. The main drawback of this line of action lies in the underlying assumptions on the conditional density of the dependent variable, as well as homogeneity and symmetry of the error terms, which led to the conditional mode coinciding with the conditional mean \citep{chen_2018}.  Further work that sought to eliminate the error's assumptions, such as: \citet{Kemp_Silva_2012}, where the loss function allows for smooth kernels; as well as  \citet{yao_li_2014}, who focused on high-dimensional data. Both methods have algorithmic issues, leading to a nonconvex optimization problems, with no guarantee of convergence to the global maximum, and high sensitivity to the selected starting point.\footnote{Some further exploration of this literature strand can be found in variable selection \citep{zhang_2013}, time series analysis \citep{kemp_etal_2020} and in panel data \citep{ullah_etal_2021}.}

\textbf{(2) Unique Mode \& Nonparametric Estimation:} linearity can be too restrictive depending on the type of data; thus, nonparametric regression can model the components of the conditional mode as smooth functions of the covariates \citep{Chen_et_al_2016}. A generalization of local polynomial regression is provided by \citet{yao_et_al_2012}, denoted local modal regression, which presents better efficiency when errors are heavy-tailed. Nonetheless, the model suffers from the “curse of dimensionality” and error's symmetry is imposed.\footnote{According to \citet*{Zhang_Kato_Ruppert_2021}, their sixth assumption (symmetry of the error term) leads to the problem corresponding to conditional mean estimation.} A more recent nonparametric regression for single mode that mitigates the “curse of dimensionality” is found in \citet{feng_et_al_2020}. This estimation of the conditional mode is achieved via an empirical risk minimization approach, which turns the problem into non-dependable on dimension.

\textbf{(3) Multimodal \& Nonparametric Estimation:} as it is not always the case that data structures can be interpreted as unimodal, \citet{scott_1992} proposed to consider various modes as points of local maxima of the conditional density of the response. Both \citet{matzner_etal_1998} and \citet{einbeck_tutz_2006} solved prediction problems using the conditional mode. Using less restrictive assumption on the kernel density function, \citet{Chen_et_al_2016} developed a KDE-based mode regression with strong asymptotic properties and no specification issues. This model is more general than ours, since it allows for multiple modes, while we assume a unique mode at each design point of interest. Despite this, the model is susceptible to the "curse of dimensionality", given that the convergence is slow even with a modest number of regressors \citep*{Zhang_Kato_Ruppert_2021}.

\textbf{(4) Conditional Quantile Approach:} motivated by the challenges of nonconvex optimization and misspecification in linear models, as well as the susceptibility of nonparametric estimators to the "curse of dimensionality", \citet*{ota_kato_hara_2019} developed a novel semiparametric approach with an intermediate step based on quantile regression. The traditional quantile function of \citet{Koenker_Bassett_1978} is estimated, following,  a conditional \textit{quantile density} estimator is obtained via numerical differentiation. The underlying model does not impose linearity in the mode function, even when the quantile regression model is linear-in-covariates. Furthermore, it is computationally appealing, since quantile regression can be written as a linear programming problem. Still, the \citeauthor{Koenker_Bassett_1978}'s (\citeyear{Koenker_Bassett_1978}) empirical conditional quantile function is not continuous---hence the mentioned numerical differentiation, which leads to an approximate optimizer for the mode. In order to surpass this problem, \citet*{Zhang_Kato_Ruppert_2021} propose to post-smooth the quantile regression estimator by a kernel function. This strategy not only circumvents the approximate optimization, but also yields faster convergence rates. Both models take off from a key identity that we also explore in this paper, namely, that the quantile density is the reciprocal of the density function, evaluated at the quantile of interest, which summarizes how the conditional mode can be retrieved from the quantile density. 

\subsection{Organization}
The rest of this paper is organized as follows. In Section \ref{Model_Section} the model is presented, we introduce the estimator and explore its relationship with the smoothed quantile regression. In Section \ref{Main Results Section} we enunciate the main mathematical results of the paper, as well as commentary regarding convergence rates and different bandwidth scenarios.  Section \ref{Simulation} is a preliminary Monte Carlo study in which we simulate the possible limiting distribution for the \textit{Convolution Mode Regression Estimator}. Finally, in Section \ref{Concluding Remarks} we state our concluding remarks along with possibilities for future work. The Appendix contains additional mathematical material, such as derivations and convergence rates calculations.


\section{Convolution Mode Regression} \label{Model_Section}
\subsection{Setup}

Let $Y \in \mathbb{R}$ represent a target random variable for which we are interested in estimating the conditional mode, given a $d$-dimensional vector $X$ of covariates, and write $\mathcal{X}:=\mathrm{support}(X)$. Assume that $Y|X=x$ is continuous and unimodal, having conditional cdf $F(\cdot|x)$ and conditional pdf $f(\cdot|x)$. Then, the \defin{conditional mode} of $Y$ given $X=x$, denoted by $m(x)$, is defined as:
\begin{equation} \label{mode_def}
    m(x) = \argmax_{y \in \mathbb{R}} f(y|x),\qquad x\in \mathcal{X}.
\end{equation}
Thus, $m(x)$ corresponds to the point in the covariate space at which the (conditional) density of the response attains its maximum value. Additionally, define the \defin{$\tau$-th conditional quantile of $Y$ given $X=x$} as the scalar $Q(\tau|x)$ given by
\begin{equation*}
    Q(\tau|x) := \inf \lbrace y \in \mathbb{R}: F(y|x) \ge \tau\rbrace, \quad \tau \in (0,1), x \in \mathcal{X}
\end{equation*}
and the \defin{conditional quantile function} as the mapping $\tau \mapsto Q(\tau|x)$.

It is important to point out that the quantile function is entirely retrievable from the cdf, since it is just the generalized inverse of the function $y\mapsto F(y|x)$, and in the case of the distribution being a continuous function we have that $Q(\cdot|x):=F^{-1}(\cdot|x)$ for each $x$ \citep{vdVaart_1998, koenker_2005}. Furthermore, the \defin{conditional quantile density function (qdf)} is defined through
\begin{equation}
q(\tau |x) = Q'(\tau|x) = \frac{\partial Q(\tau|x)}{\partial \tau},\quad \tau\in(0,1),\,x\in\mathcal{X}.
\end{equation}

A key identity explored by \citet*{ota_kato_hara_2019} and \citet*{Zhang_Kato_Ruppert_2021} is that, as a consequence of the Inverse Function Theorem, the identity
\begin{equation} \label{key_id}
    q(\tau|x) = \frac{1}{f(Q(\tau|x)|x)}
\end{equation}
holds for every allowable $x$ and $\tau$. In this sense, given some regularity conditions which we introduce below, we can minimize the inverse of the density (which is the qdf), as in equation \eqref{key_id}, and retrieve the maximizer of $y \mapsto f(y|x)$ from equation \eqref{mode_def}; thus,
\begin{equation}
    m(x) = Q\big(\argmin\nolimits_{\tau} q(\tau|x)\,|\,x\big)
\end{equation}

Regarding the quantile function, both \citet*{ota_kato_hara_2019} and \citet*{Zhang_Kato_Ruppert_2021} consider the quantile regression model developed by \citet{Koenker_Bassett_1978}, which stipulates a linear-in-covariates representation of the conditional quantile function:
\begin{equation} \label{quantile}
    Q(\tau|x) = x^\intercal \beta(\tau), \quad \tau \in (0,1),\, x \in \mathcal{X},
\end{equation}
where $\beta : (0,1) \mapsto \mathbb{R}^d$ is a functional parameter. For each fixed $\tau$ in the interval $(0,1)$, the vector $\beta(\tau)$ in \eqref{quantile} solves a similar minimization problem as the one found in classic linear regression. For this end, the following population objective function is proposed:
\begin{equation} \label{Q_pop_obj_function}
    R(b; \tau) := \E [ \rho_\tau (Y - X^\intercal b )] = \int \rho_\tau (t)\, \dd F(t;b)
\end{equation}
with $\rho_\tau (u) := u[\tau - \mathbb{I}(u<0)]$ known as the check function. The true parameter $\beta(\tau)$ minimizes $R(b;t)$ with respect to $b\in\R^d$. The sample equivalent proposed by \citet{Koenker_Bassett_1978} is defined as:
\begin{equation} \label{Q_sample_obj_function}
    \widehat{R}(b;\tau) := \frac{1}{n} \sum^{n}_{i=1} \rho_\tau (Y_i - X_i^\intercal b) = \int \rho_\tau (t)\, \dd \widehat{F}(t;b),
\end{equation}
where $\widehat{F}(\cdot;b)$ is the empirical distribution function of $\varepsilon_i(b):=Y_i - X_i^\intercal b$, for $i=1,...,n$, with the traditional quantile regression estimator as the minimizer of $\widehat{R}(b;\tau)$, with respect to $b\in\R^d$, that is:
\begin{equation} \label{traditional_QRE}
\widehat{\beta}(\tau) = \argmin_{b\in\R^d} \widehat{R}(b;\tau),\qquad \tau\in(0,1) 
\end{equation}

According to Theorem 2.1 in \citet{Bassett_Koenker_1982}, the empirical conditional quantile function $\tau\mapsto x^\intercal \widehat{\beta}(\tau)$ exhibits jumps, in particular it is not differentiable. To overcome this issue, \citet*{fernandes_guerre_horta_2021} proposed using a kernel-type cdf estimator, similar to \citet{Nadaraya_1964}, instead of the empirical distribution function. The resulting smoothed version of the sample objective function in \eqref{Q_sample_obj_function} is:
\begin{equation} \label{smoothed_QR}
    \widehat{R}_h(b; \tau) = \frac{1}{n} \sum_{i=1}^n k_h \ast \rho_\tau (Y_i - X_i^\intercal b) = \int \rho_\tau (t) \widehat{f}_h(t;b)\,\dd t
\end{equation}
where the symbol $\ast$ denotes the convolution operator, and where $\widehat{f}_h(\cdot;b)$ is the kernel estimator of the density of $Y_i - X_i^\intercal b$. Here, $k_h(u) = k(u/h)/h$, where $k\colon \R\to\R_+$ is a smooth kernel function and $h>0$. The new estimator is the minimizer of the objective function \eqref{smoothed_QR}, called the \defin{smoothed quantile regression estimator (SQRE)} and defined by:
\begin{equation} \label{SQRE}
   \widehat{\beta}_h(\tau) := \argmin_{b \in \mathbb{R}^d} \widehat{R}_h(b; \tau),\qquad\tau\in(0,1).
\end{equation}

The mapping $ \tau \mapsto \widehat{\beta}_h(\tau)$ is continuously differentiable over the interval $(0, 1)$, unlike $\widehat{\beta}$. Differentiability offers notable advantages, and the reasons are twofold: (i) the smoothness of the objective function ensures the regularity of the resulting estimator; (ii) the asymptotic covariance matrix of $\widehat{\beta}_h(\tau)$ can be estimated in a standard fashion, as in \citet{Newey_McFadden_1994}. Regarding differentiability, writing $\widehat{R}^{(1)}_h(b;\tau):= {\partial \widehat{R}_h(b;\tau)}/{\partial b}$, the SQRE satisfies the first-order condition $\widehat{R}^{(1)}_h(\widehat{\beta}_h(\tau);\tau) = 0$. Accordingly, following the Implicit Function Theorem, we obtain:
\begin{equation} \label{beta_hat_1}
    \widehat{\beta}_h^{(1)}(\tau) = 
    \frac{\partial \widehat{\beta}_h(\tau)}{\partial \tau} :=  
    \Big[\widehat{R}_h^{(2)}\big(\widehat{\beta}_h(\tau); \tau\big)\Big]^{-1} \bar{X}
\end{equation}
Explicit formulas for the first and second order derivatives of $\widehat{R}_h(b;\tau)$ with respect to $b$ (respectively, $\widehat{R}^{(1)}_h(b;\tau)$ and $\widehat{R}^{(2)}_h(b;\tau)$) are provided in equation \eqref{derivates_of_Rh} in \ref{Appendix 0}. 


\subsection{Estimation}
Consider the following objective (or "sparsity"\footnote{This is the nomenclature used by \citet*{ota_kato_hara_2019} and \citet*{Zhang_Kato_Ruppert_2021}.}) function:
\begin{equation} \label{sparsity-pop}
    s_x(\tau):=
    -q(\tau|x), 
    \quad \tau\in(0,1),\,x\in\mathcal{X}.
\end{equation}
It is not difficult to show that
\begin{equation}\label{sparsity-pop-explicit}
s_x(\tau) = 
-x^\intercal {\beta^{(1)}(\tau)} = -x^\intercal [D(\tau)]^{-1} \mathbb{E}X 
\end{equation}
where
\begin{equation}\label{Ds_Rs-pop}
D(\tau) = R^{(2)}(\beta(\tau);\tau) = \mathbb{E}[XX^\intercal f(X^\intercal \beta(\tau)|X)].
\end{equation}

Under some regularity assumptions that will be introduced below, the function $\tau\mapsto s_x(\tau)$ has a unique maximizer, denoted $\tau_x$, which we call the \defin{conditional quantile mode} of $Y$ given $X=x$. If we plug in this optimizer in the quantile function $Q(\cdot|x)$ we get the expression $m(x) = Q(\tau_x|x)$. Consequently, the estimation of $m(x)$ boils down to estimating the conditional quantile function, and $\tau_x$. In view of \eqref{sparsity-pop-explicit}, we define, for conformable $\tau$, $x$ and $h$, the \defin{sample conditional sparsity function} as:
\begin{equation} \label{sparsity-sample-explicit}
    \widehat{s}_{x,h}(\tau) =
    -x^\intercal {\widehat{\beta}_h^{(1)}(\tau)} = - x^\intercal [\widehat{D}_h(\tau)]^{-1} \bar{X},
\end{equation}
where
\begin{equation} \label{Ds_Rs-sample}
\widehat{D}_h(\tau) = \widehat{R}^{(2)}_h(\widehat{\beta}_h(\tau);\tau) = \frac{1}{n} \sum_{i=1}^{n}X_iX_i^\intercal k_h(X_i^\intercal \widehat{\beta}_h(\tau)-Y_i),      
\end{equation}
see \citet*{fernandes_guerre_horta_2021}.

The optimizers for the sparsity functions, both population and sample, as defined in \eqref{sparsity-pop-explicit} and \eqref{sparsity-sample-explicit}, are given by:
    \begin{equation}\label{taus}
        \tau_x = \argmax_{\tau \in (0,1)} s_x(\tau)
        \quad \mathrm{and} \quad 
        \widehat{\tau}_{x,h} = \argmax_{\alpha\le\tau\le1-\alpha} \widehat{s}_{x,h}(\tau)
    \end{equation}
where $0<\alpha<1/2$ is a constant.

Our proposed \defin{smoothed conditional mode estimator} is then given by
\begin{equation} \label{mode_asQ}
    \widehat{m}_h(x) := \widehat{Q}_{x,h}(\widehat{\tau}_{x,h}) = x^\intercal \widehat{\beta}_h (\widehat{\tau}_{x,h}),
\end{equation}
for all $x\in\mathcal{X}$ and every allowable $h$.


\section{Main Results} \label{Main Results Section}
Before providing consistency results of the proposed estimator, we state the conditions for which our results are derived. 

\subsection{Assumptions:}

\begin{itemize}
\item \textbf{A1:}
The support of $X$, denoted $\mathcal{X}$, is compact and a subset of $\bar{\mathbb{R}}^{d}_{+ \ast}$, i.e.,\  the components of $X$ are positive, bounded RVs. The matrix $\E [XX^\intercal]$ is full rank. 
\item \textbf{A2:}
The mapping $\tau \mapsto \beta(\tau)$ is three times continuously differentiable.
\item \textbf{A3:}
The conditional density $f(y|x)$ is continuous and strictly positive over $\mathbb{R} \times \mathcal{X}$. Also, the derivative $f^{(1)}(\cdot|\cdot)$ exists and is uniformly continuous in the sense that 
\begin{equation*}
\lim_{\epsilon\to0} \sup_{(x,y)\in\mathbb{R}^{d+1}}\sup_{t:|t|\le \epsilon} \big\vert f^{(1)}(y+t|x) - f^{(1)}(y|x)\big\vert = 0, 
\end{equation*} 
and that $\sup_{(x,y)\in\mathbb{R}^{d+1}}\vert f^{(j)}(y|x)\vert < \infty$ and $\lim_{y\to\pm\infty} f^{(j)}(y|x)=0$ for all \(j \in \{0,1\}\).

\textbf{Remark}. The degree of differentiability of \(f(\cdot | \cdot)\) is used in \citet*{fernandes_guerre_horta_2021} to control the order of the smoothing kernel. Here, we set the maximum value of $j$ equal to 1, for simplicity.

\item \textbf{A4:}
The kernel $k\colon\mathbb{R} \mapsto \mathbb{R}$ is even, integrable and has bounded first and second derivatives. Additionally, $\int k(z) \dd z=1$; $0<\int^{\infty}_0 K(z)[1-K(z)]\dd z < \infty$ and, lastly, $0 < \int z^2k(z) \dd z < \infty$.

\item \textbf{A5:}
$h \in [\underline{h}_n, \bar{h}_n]$ with $n\underline{h}_{n}^3 /\log n\to\infty$ and $\bar{h}_n = o(1)$.

\item \textbf{A6:}
For all $x\in \mathcal{X}$, there exists $\tau_x\in(0,1)$ such that, for every $\epsilon>0$, it holds that
\begin{equation*}
    \sup_{\tau : |\tau-\tau_x| \ge \epsilon} s_x(\tau) < s_x(\tau_x)
\end{equation*}

\item \textbf{A7:}
For some $0<\alpha<1/2$, it holds that
\begin{equation*}
    \alpha < 
    \inf_{x\in\mathcal{X}} \tau_x \leq
    \sup_{x\in\mathcal{X}} \tau_x <
    1-\alpha.
\end{equation*} 
\end{itemize}

Assumptions A1-A5 are taken directly from \citet*{fernandes_guerre_horta_2021}, with minor modifications. Due to A1-A3, the Hessian $D(\tau)$ as defined in \eqref{Ds_Rs-pop}, is positive definite for all possible values of $\tau \in (0,1)$, therefore, $D(\tau)$ is invertible. Additionally, A2 ensures the function $\tau\mapsto Q(\tau|x)$ is increasing over the interval $(0,1)$, and, together with A3, that its derivative with respect to $\tau$ is  strictly positive. Also, A3 expresses some ordinary regularity conditions which guarantee smoothness of $f(\cdot | \cdot)$ \citep{koenker_2005}. Similar conditions can be found in \citet{Chen_et_al_2016}; \citet*{ota_kato_hara_2019, Zhang_Kato_Ruppert_2021}; however, each of these estimates requires four-times continuous differentiability of the density. Assumptions A4 and A5 concern the kernel function $k$ and the bandwidth parameter $h$ and are necessary for the previously stated benefits of the SQRE over the traditional QR estimator. A6 ensures uniqueness of the conditional mode and is also commonly used in deriving consistency of M-estimators, see Theorem 5.7 in \citet{vdVaart_1998}. Finally, Assumption A7 limits the possible values for the optimizer $\tau_x$, ensuring that the conditional modes are bounded away from the tails of the conditional distributions, uniformly on the covariate space.

\subsection{Consistency of the Convolution Mode Regression Estimator:}

The following lemma is a reinstatement of an inequality in \citet*{fernandes_guerre_horta_2021}. 
\begin{lemma}\label{FGH_prop1_lemma}
Under Assumption A1 to A5, it holds that
\begin{align}
\begin{split}
\left\Vert \widehat{D}_h(\tau) - D(\tau)\right\Vert
&= o(h^j) + O_P\left(\sqrt{{\log n}/{(nh)}}\right),
\end{split}
\end{align}
uniformly for \(\tau\in[\alpha,1-\alpha]\) and \(h\in[\underline{h}_n,\bar{h}_n]\).
\end{lemma}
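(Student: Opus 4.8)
The plan is to insert an infeasible Hessian evaluated at the \emph{true} parameter value and to split the error into a smoothing bias, a stochastic fluctuation, and a plug-in error. For fixed $b\in\R^d$ write
\[
\widehat{D}_h(\tau,b):=\frac{1}{n}\sum_{i=1}^n X_iX_i^\intercal k_h(X_i^\intercal b-Y_i),
\qquad
D_h(\tau,b):=\E\big[XX^\intercal k_h(X^\intercal b-Y)\big],
\]
so that $\widehat{D}_h(\tau)=\widehat{D}_h(\tau,\widehat{\beta}_h(\tau))$ while $D(\tau)=\E[XX^\intercal f(X^\intercal\beta(\tau)|X)]$. Then
\begin{align*}
\widehat{D}_h(\tau)-D(\tau)
&=\underbrace{\big[\widehat{D}_h(\tau,\widehat{\beta}_h(\tau))-D_h(\tau,\widehat{\beta}_h(\tau))\big]}_{(\mathrm{I})}\\
&\quad+\underbrace{\big[D_h(\tau,\widehat{\beta}_h(\tau))-D_h(\tau,\beta(\tau))\big]}_{(\mathrm{II})}
+\underbrace{\big[D_h(\tau,\beta(\tau))-D(\tau)\big]}_{(\mathrm{III})},
\end{align*}
and I would bound each bracket uniformly over $\tau\in[\alpha,1-\alpha]$ and $h\in[\underline{h}_n,\bar{h}_n]$.

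For the bias term $(\mathrm{III})$, I would condition on $X=x$ and use $\E[k_h(x^\intercal\beta(\tau)-Y)\,|\,X=x]=(k_h\ast f(\cdot|x))(x^\intercal\beta(\tau))$. Since $k$ is even and integrates to one (A4), the linear term in a Taylor expansion vanishes, and the uniform continuity of $f^{(1)}$ in A3 controls the remainder, giving $(k_h\ast f(\cdot|x))(t)-f(t|x)=o(h^j)$. The argument $t=x^\intercal\beta(\tau)$ ranges over a compact set as $(\tau,x)$ varies over $[\alpha,1-\alpha]\times\mathcal{X}$ (A1, A2), so the estimate is uniform; together with boundedness of $XX^\intercal$ (A1) this yields $(\mathrm{III})=o(h^j)$ uniformly in $\tau$ and $h$.

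The stochastic term $(\mathrm{I})$ is a centered average of i.i.d.\ kernel-weighted matrices whose variance is of order $(nh)^{-1}$. The required control is \emph{simultaneous} over $b$ in a shrinking neighborhood of $\beta(\tau)$, over $\tau$, and over the whole bandwidth range $h\in[\underline{h}_n,\bar{h}_n]$; this is precisely the uniform-in-bandwidth deviation bound for kernel estimators established in \citet*{fernandes_guerre_horta_2021}. Taking the supremum over the triple $(b,\tau,h)$ inflates the pointwise rate by $\sqrt{\log n}$, so $(\mathrm{I})=O_P\big(\sqrt{\log n/(nh)}\big)$, which A5 renders $o(1)$.

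For the plug-in term $(\mathrm{II})$ the key point is that $D_h(\tau,\cdot)$ is Lipschitz with a constant that does \emph{not} blow up as $h\to0$: differentiating under the expectation and integrating by parts against $f(\cdot|x)$ (legitimate because $f^{(j)}\to0$ at $\pm\infty$ by A3) transfers the derivative from $k_h$ onto the density, turning the apparent $O(h^{-2})$ factor from $k_h'$ into the bounded quantity $(k_h\ast f^{(1)})(x^\intercal b)=O(1)$ uniformly in $h$. Hence $(\mathrm{II})=O(\|\widehat{\beta}_h(\tau)-\beta(\tau)\|)$, and the SQRE rate from \citet*{fernandes_guerre_horta_2021} under A1--A5---a smoothing bias of order $h^2=o(h^j)$ for $j=1$ together with near-parametric stochastic error $O_P(\sqrt{\log n/n})=o(\sqrt{\log n/(nh)})$---makes this term strictly dominated by $(\mathrm{I})$ and $(\mathrm{III})$. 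Collecting the three bounds gives the claim. The main obstacle is the uniform-in-bandwidth control of $(\mathrm{I})$, which genuinely requires the empirical-process argument of \citet*{fernandes_guerre_horta_2021}; since the statement is a restatement of an inequality proved there, the remaining work is essentially to verify that A1--A5 match their hypotheses.
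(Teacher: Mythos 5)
Your decomposition and the bounds for each piece are sound, but be aware that the paper itself offers no argument here: its ``proof'' of Lemma~\ref{FGH_prop1_lemma} is the single sentence ``See the proof of Proposition~1 in \citet*{fernandes_guerre_horta_2021}.'' What you have written is therefore not a comparison target so much as a reconstruction of the argument the paper outsources, and as a reconstruction it is accurate. The three-way split into the centered stochastic term $(\mathrm{I})$, the plug-in term $(\mathrm{II})$, and the smoothing bias $(\mathrm{III})$ is the standard route to such a statement; your handling of $(\mathrm{III})$ via the even kernel killing the first-order Taylor term and the uniform continuity of $f^{(1)}$ in A3 supplying the $o(h)$ remainder is exactly right for $j=1$, and the integration-by-parts trick in $(\mathrm{II})$ that keeps the Lipschitz constant of $b\mapsto D_h(\tau,b)$ bounded as $h\to0$ (using $f^{(j)}\to0$ at $\pm\infty$ from A3) is the correct way to avoid the spurious $h^{-2}$ from $k_h'$; combined with the SQRE rate $O_P(n^{-1/2}+h^2)$ this term is indeed dominated. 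You also correctly identify the one genuinely hard ingredient --- the uniform-in-$(b,\tau,h)$ deviation bound giving $O_P(\sqrt{\log n/(nh)})$ with the $\sqrt{\log n}$ inflation --- as something that cannot be derived from first principles in a sketch and must be imported from the empirical-process machinery of \citet*{fernandes_guerre_horta_2021}. In short: you supply the skeleton of a proof where the paper supplies only a citation, and your final remark that the residual task is to check that A1--A5 match the cited hypotheses is the honest statement of what both you and the authors are actually relying on.
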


\begin{proof}
See the proof of Proposition~1 in \citet*{fernandes_guerre_horta_2021}.
\end{proof}

Our next result is regarding the sample sparsity function and the fact that it converges to the population counterpart. 

\begin{lemma}\label{lemma_sparsity_consistency}
Under Assumptions A1 to A5, it holds that
\[\vert \widehat{s}_{x,h}(\tau) - s_x(\tau)\vert = o(h) + O_P\left(\sqrt{\log n/(nh)}\right)\]
uniformly over \(\tau\in [\alpha,1-\alpha]\),
\(x\in\mathcal{X}\) and \(h\in[\underline{h}_n,\bar{h}_n]\).
\end{lemma}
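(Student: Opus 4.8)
The plan is to reduce the statement entirely to the matrix bound already furnished by Lemma~\ref{FGH_prop1_lemma}, exploiting that both sparsity functions are the same linear functional applied to the respective inverse Hessians. Writing $\widehat{D} := \widehat{D}_h(\tau)$, $D := D(\tau)$ and $\mu := \mathbb{E}X$, and recalling the closed forms \eqref{sparsity-pop-explicit} and \eqref{sparsity-sample-explicit}, I would start from the elementary identity
\begin{equation*}
\widehat{s}_{x,h}(\tau) - s_x(\tau) = -x^\intercal\big(\widehat{D}^{-1}\bar{X} - D^{-1}\mu\big) = -x^\intercal\widehat{D}^{-1}(\bar{X}-\mu) - x^\intercal\big(\widehat{D}^{-1}-D^{-1}\big)\mu,
\end{equation*}
which splits the error into a sampling term (involving $\bar{X}-\mu$) and a plug-in term (involving the inverse Hessians), to be controlled separately.

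For the plug-in term I would invoke the resolvent identity $\widehat{D}^{-1}-D^{-1} = -\widehat{D}^{-1}(\widehat{D}-D)D^{-1}$, giving $\|\widehat{D}^{-1}-D^{-1}\| \le \|\widehat{D}^{-1}\|\,\|\widehat{D}-D\|\,\|D^{-1}\|$. Two uniform ingredients feed this bound. First, a uniform lower bound on $\lambda_{\min}(D(\tau))$: since $\tau\mapsto D(\tau)$ is continuous and positive definite (by A1--A3) on the compact interval $[\alpha,1-\alpha]$, its smallest eigenvalue attains a strictly positive minimum, so $C := \sup_{\tau}\|D(\tau)^{-1}\| < \infty$. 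Second, a uniform bound on $\|\widehat{D}^{-1}\|$: Lemma~\ref{FGH_prop1_lemma} together with A5 (which forces $\log n/(n\underline{h}_n)\to 0$, hence both error terms to vanish) yields $\sup_{\tau,h}\|\widehat{D}-D\| = o_P(1)$, so a standard perturbation argument shows that on an event of probability tending to one, $\widehat{D}$ is invertible with $\|\widehat{D}^{-1}\|\le 2C$ \emph{simultaneously} for all $\tau$ and $h$. Feeding Lemma~\ref{FGH_prop1_lemma} back in and factoring out $\|x\|\|\mu\|$ (finite by A1), the plug-in term is $o(h) + O_P(\sqrt{\log n/(nh)})$.

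For the sampling term, A1 makes the components of $X$ bounded, so the central limit theorem gives $\|\bar{X}-\mu\| = O_P(n^{-1/2})$, a quantity independent of $\tau$ and $h$; combined with $\|\widehat{D}^{-1}\|\le 2C$ and $\sup_{x\in\mathcal{X}}\|x\|<\infty$, this term is $O_P(n^{-1/2})$ uniformly. Because $h\le\bar{h}_n = o(1)$ forces $n^{-1/2} = o(\sqrt{\log n/(nh)})$, it is absorbed into the stochastic part of the plug-in bound. Adding the two contributions and noting that the compactness of $\mathcal{X}$ renders every estimate uniform in $x$ yields the claimed rate, uniformly over $\tau\in[\alpha,1-\alpha]$, $x\in\mathcal{X}$ and $h\in[\underline{h}_n,\bar{h}_n]$.

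The step I expect to be the main obstacle is securing the bound $\|\widehat{D}^{-1}\|\le 2C$ as a \emph{single} high-probability event valid across all $\tau$ and $h$ at once, since the resolvent estimate is worthless without it. This is precisely where the uniformity built into Lemma~\ref{FGH_prop1_lemma} must be paired with the compactness-and-continuity lower bound on $\lambda_{\min}(D(\tau))$; once this is in hand, the remainder is routine operator-norm bookkeeping.
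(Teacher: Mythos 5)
Your proposal is correct and follows essentially the same route as the paper's own proof: the identical decomposition into a sampling term $x^\intercal\widehat{D}^{-1}(\bar{X}-\mu)$ and a plug-in term $x^\intercal(\widehat{D}^{-1}-D^{-1})\mu$, with Lemma~\ref{FGH_prop1_lemma} controlling the latter via the Lipschitz behaviour of matrix inversion and the $O_P(n^{-1/2})$ sampling error absorbed into the stochastic rate. Your explicit treatment of the uniform-in-$(\tau,h)$ invertibility of $\widehat{D}_h(\tau)$ via the resolvent identity and the compactness lower bound on $\lambda_{\min}(D(\tau))$ is a welcome elaboration of a step the paper passes over quickly, but it is not a different argument.
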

\begin{proof}
Write
\[-(\widehat{s}_{x,h}(\tau) - s_x(\tau))\, =\, 
x^\intercal[\widehat{D}_h(\tau)]^{-1}\bar{X}
\,
-
\,
x^\intercal [D(\tau)]^{-1}\mathbb{E}X.\]
Using \(\bar{X} = \bar{X} - \mathbb{E}X + \mathbb{E}X\) and rearranging, we have
\begin{align}
\begin{split}
-(\widehat{s}_{x,h}(\tau) - s_x(\tau)) &= x^\intercal[\widehat{D}_h(\tau)]^{-1}(\bar{X}-\mathbb{E}X) 
\\
&\qquad \ +\ x^\intercal \left\{ [\widehat{D}_h(\tau)]^{-1} - [D(\tau)]^{-1}\right\} (\mathbb{E}X)
\end{split}
\end{align}
Lemma~\ref{FGH_prop1_lemma} implies, by the local Lipschitz property of matrix inversion, that
\[
\left\Vert [\widehat{D}_h(\tau)]^{-1} - [D(\tau)]^{-1} \right\Vert = o(h) + O_P\left(\sqrt{\log n/(nh)}\right)
\]
uniformly in $\tau$ and $h$ as above. This together with
\[
\sup_{\tau,h} |\widehat{D}_h(\tau)| = O_P(1),
\quad
\bar{X} - \E X = O_P(1/\sqrt{n}),
\quad
\E X = O(1),
\quad
\sup_{x\in\mathcal{X}}\Vert x\Vert = O(1)
\]
tells us that
\begin{align} \label{s_rate}
\begin{split}
-(\widehat{s}_{x,h}(\tau) - s_x(\tau) )
&= x^\intercal O_p(1)O_p(1/\sqrt n)
\\
&\qquad+ x^\intercal\left( o(h) + O_P \left( \sqrt{{\log n}/{(nh)}} \right) \right)O(1)
\\
&= o(h) + O_P \left( \sqrt{{\log n}/{(nh)}}\right)
\end{split}
\end{align}
as stated.
\end{proof}

After showing that our sparsity functions are consistent, we prove that its maximizer $\widehat{\tau}_{x,h}$, in equation \eqref{taus}, is also consistent for $\tau_x$. 

\begin{theorem} \label{theorem_cons_tau}
Under Assumptions A1 to A7, it holds that
\begin{equation} \label{rate_tau}
    \widehat{\tau}_{x,h} = \tau_x +
    o(h^{1/2}) + O_P\Bigg( \Big( \frac{\log n}{nh} \Big)^{1/4}\Bigg)
\end{equation}
uniformly for \(x\in\mathcal{X}\) and \(h\in[\underline{h}_n,\bar{h}_n]\).
\end{theorem}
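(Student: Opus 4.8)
The plan is to read Theorem~\ref{theorem_cons_tau} as a standard rate-of-convergence statement for an $\argmax$ estimator: Lemma~\ref{lemma_sparsity_consistency} already controls, in sup-norm and uniformly in $(x,h)$, how fast $\widehat{s}_{x,h}$ approaches $s_x$, and the rate for the maximizer follows by pairing this with a local \emph{quadratic} lower bound on the population criterion $s_x$ near $\tau_x$. The heuristic is the familiar one: a uniform error of order $r_n$ on the objective becomes an error of order $\sqrt{r_n}$ on the location of the maximum when the peak is non-degenerate, and here $r_n=o(h)+O_P(\sqrt{\log n/(nh)})$ gives $\sqrt{r_n}=o(h^{1/2})+O_P((\log n/(nh))^{1/4})$, matching \eqref{rate_tau}. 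Concretely, I would first record the \emph{basic inequality}: by A7 the mode $\tau_x$ lies in the interior of $[\alpha,1-\alpha]$, hence is feasible for the constrained problem defining $\widehat{\tau}_{x,h}$ in \eqref{taus}, so $\widehat{s}_{x,h}(\widehat{\tau}_{x,h})\ge\widehat{s}_{x,h}(\tau_x)$. Telescoping $s_x(\tau_x)-s_x(\widehat{\tau}_{x,h})$ into three pieces and discarding the nonpositive middle one yields
\[
0 \;\le\; s_x(\tau_x)-s_x(\widehat{\tau}_{x,h}) \;\le\; 2\sup_{\tau\in[\alpha,1-\alpha]}\bigl|\widehat{s}_{x,h}(\tau)-s_x(\tau)\bigr|,
\]
whose right-hand side is $o(h)+O_P(\sqrt{\log n/(nh)})$ by Lemma~\ref{lemma_sparsity_consistency}, uniformly in $x\in\mathcal{X}$ and $h\in[\underline{h}_n,\bar{h}_n]$.

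Next I would establish the quadratic lower bound. By A2 the map $\tau\mapsto s_x(\tau)=-x^\intercal\beta^{(1)}(\tau)$ is twice continuously differentiable, and by A7 the maximizer $\tau_x$ is interior, so $s_x^{(1)}(\tau_x)=0$; assuming a non-degenerate peak, $s_x^{(2)}(\tau_x)<0$. A second-order Taylor expansion then supplies a constant $c>0$ and a radius $\delta>0$, both independent of $x$ by compactness of $\mathcal{X}$ (A1) and continuity, with $s_x(\tau_x)-s_x(\tau)\ge c(\tau-\tau_x)^2$ whenever $|\tau-\tau_x|\le\delta$. To use this local bound I would first invoke consistency $\widehat{\tau}_{x,h}\to\tau_x$, which follows from the well-separation assumption A6 together with the uniform convergence of Lemma~\ref{lemma_sparsity_consistency} in the usual way (cf.\ Theorem~5.7 in \citet{vdVaart_1998}); this places $\widehat{\tau}_{x,h}$ within $\delta$ of $\tau_x$ with probability tending to one. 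Chaining the two bounds gives
\[
c\,(\widehat{\tau}_{x,h}-\tau_x)^2 \;\le\; s_x(\tau_x)-s_x(\widehat{\tau}_{x,h}) \;=\; o(h)+O_P\bigl(\sqrt{\log n/(nh)}\bigr),
\]
and taking square roots, with $\sqrt{o(h)}=o(h^{1/2})$ and the corresponding simplification of the stochastic term, delivers \eqref{rate_tau}.

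The step I expect to be the main obstacle is securing the curvature condition $s_x^{(2)}(\tau_x)<0$ \emph{uniformly} in $x$, i.e.\ $\sup_{x\in\mathcal{X}}s_x^{(2)}(\tau_x)<0$. Assumption A6 only asserts that $\tau_x$ is a well-separated maximizer, which is enough for consistency but is compatible with an arbitrarily flat peak and therefore cannot by itself produce the $1/4$ rate; the strict second-order condition must be imported as non-degeneracy of the conditional mode. Since $s_x=-x^\intercal\beta^{(1)}$ is $C^2$ in $\tau$ by A2 (so $s_x^{(2)}$ exists through $\beta$, regardless of the smoothness of $f$ in $y$), the clean hypothesis is $s_x^{(2)}(\tau_x)<0$ itself; when $f(\cdot|x)$ happens to be twice differentiable, differentiating $s_x=-q(\cdot|x)$ and using $f^{(1)}(m(x)|x)=0$ shows $s_x^{(2)}(\tau_x)$ carries the sign of $f^{(2)}(m(x)|x)$, so the assumption amounts to each conditional density having a genuine, non-flat peak at its mode, made uniform over the compact $\mathcal{X}$. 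I would state this explicitly as a mild strengthening of A6. The only remaining care is bookkeeping: every $o$ and $O_P$ must be tracked uniformly in both $x$ and $h$, which is automatic since Lemma~\ref{lemma_sparsity_consistency} is uniform in both and $c,\delta$ are uniform by compactness.
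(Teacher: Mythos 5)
Your proposal follows essentially the same route as the paper: the basic inequality $\widehat{s}_{x,h}(\widehat{\tau}_{x,h})\ge\widehat{s}_{x,h}(\tau_x)$ combined with Lemma~\ref{lemma_sparsity_consistency} gives $s_x(\tau_x)-s_x(\widehat{\tau}_{x,h}) = o(h)+O_P(\sqrt{\log n/(nh)})$, consistency follows from the well-separation condition A6 via compactness, and the rate is then extracted from a second-order Taylor expansion at $\tau_x$ using the first-order condition and non-vanishing curvature of $s_x$. Your remark that the uniform strict curvature $\sup_{x\in\mathcal{X}} s_x^{(2)}(\tau_x)<0$ is not literally implied by A6 and should be stated as an explicit strengthening is a fair criticism of the paper itself, which simply asserts this curvature (with a sign slip, writing ``strictly convex'' for what must be strict concavity at a maximizer) as a consequence of A2, A6 and A7 without further justification.
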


\begin{proof} The proof of Theorem (\ref{theorem_cons_tau}) consists of two parts: initially, it is proved that $\widehat{\tau}_{x,h}$ is consistent; then, in the second part of the proof, we calculate its rate of convergence.

\textbf{Part 1 (Theorem 1):} First, by the definition of $\widehat{\tau}_{x,h}$ and through Lemma~\ref{lemma_sparsity_consistency}, we have
\[
\widehat{s}_{x,h}(\widehat{\tau}_{x,h})\ge \widehat{s}_{x,h}(\tau_x)=s_x(\tau_x) + r_n,
\]
where $r_n = o(h) + O_P\left(\sqrt{\log n/(nh)}\right)$ uniformly over $\tau$, $x$ and $h$. Hence,
\begin{align}
\label{inequalities}
\begin{split}
s_x(\tau_x) - s_x(\widehat{\tau}_{x,h})
&\le
\widehat{s}_{x,h}(\widehat{\tau}_{x,h}) - s_x(\widehat{\tau}_{x,h}) - r_n
\\
&\le
|\widehat{s}_{x,h}(\widehat{\tau}_{x,h}) - s_x(\widehat{\tau}_{x,h})| + |r_n|
\\
&\le
\sup\nolimits_{\tau,x,h} |\widehat{s}_{x,h}(\tau) - s_x(\tau)| + \sup\nolimits_{\tau,x,h}|r_n|
\\
&= o(h) + O_P\left(\sqrt{\log n/(nh)}\right)
\end{split}
\end{align}
where the last equality follows again by Lemma~\ref{lemma_sparsity_consistency}.

Now, notice that compactness of $\mathcal{X}$, together with Assumptions A2, A6 and A7, ensure there exists an $\mathrm{x}\in\mathcal{X}$ such that
\[\sup_{x\in\mathcal{X}} \sup_{\tau:\ |\tau-\tau_x|\ge\epsilon} s_x(\tau) - s_x(\tau_x) = \sup_{\tau:\ |\tau-\tau_{\mathrm{x}}|\ge\epsilon} s_{\mathrm{x}}(\tau) - s_{\mathrm{x}}(\tau_{\mathrm{x}}) < 0\]
In view of this and using A6 once more, the following holds: for each \(\epsilon>0\) there exists an \(\eta>0\) such that the bound
\[s_x(\tau) \le s_x(\tau_x) - \eta\]
holds for all \(x\) in the support of \(X\) and all \(\tau\) with \(|\tau-\tau_x|\ge\epsilon\).

Using compactness of $\mathcal{X}\times [\underline{h}_n,\bar{h}_n]$ and letting \((\mathrm{x},\mathrm{h})\) attain the supremum \(\sup_{x,h} |\widehat{\tau}_{x,h} - \tau_x|\)  over \(\mathcal{X}\times [\underline{h}_n,\bar{h}_n]\), we have
\[\lbrace |\widehat{\tau}_{\mathrm{x},\mathrm{h}}-\tau_{\mathrm{x}}|\ge\epsilon \rbrace \subseteq \lbrace s_{\mathrm x}(\tau_{\mathrm x}) - s_{\mathrm x}(\widehat{\tau}_{\mathrm{x},\mathrm{h}}) \ge \eta \rbrace \subseteq \lbrace \sup\nolimits_{x,h} s_x(\tau_x) - s_x(\widehat{\tau}_{x,h}) \ge \eta\rbrace.\]
Thus, for any $\epsilon>0$,
\[
\Prob\lbrace\sup\nolimits_{x,h} |\widehat{\tau}_{x,h} - \tau_x|\ge\epsilon\rbrace
\le
\Prob\lbrace \sup\nolimits_{x,h} s_x(\tau_x) - s_x(\widehat{\tau}_{x,h}) \ge \eta\rbrace \to 0
\]
in view of \eqref{inequalities}.

\textbf{Part 2 (Theorem 1)} Recall the equations \eqref{sparsity-pop-explicit} and \eqref{sparsity-sample-explicit} with $D(\tau)$ and $\widehat{D}_h(\tau)$ defined as in \eqref{Ds_Rs-pop} and \eqref{Ds_Rs-sample}. The first derivative of $s_x(\tau)$ is as:
\begin{equation*}
    s_x^{(1)}(\tau):= \frac{\partial s_x(\tau)}{\partial \tau} = 
    x^\intercal \Bigl[ [D(\tau)]^{-1}D^{(1)}(\tau)[D(\tau)]^{-1} \mathbb{E}X 
\Bigr]
\end{equation*}
with $D^{(1)}(\tau)$ defined as $\partial D(\tau)/\partial\tau = \mathbb{E}[XX ^\intercal f^{(1)}(X ^\intercal \beta(\tau)|X) \cdot X^\intercal [D(\tau)]^{-1}\mathbb{E}X]$; the derivation of $D^{(1)}(\tau)$ is found in Appendix \ref{Appendix 1}, equation \eqref{d_first_deriv}. 

The first order condition $s_x^{(1)}(\tau_x)=0$ and its sample analog, $\widehat{s}^{(1)}(\widehat{\tau}_{x,h})=0$ yield:
\begin{align*}
\begin{split}
x^\intercal [D(\tau_x)]^{-1}D^{(1)}(\tau_x)[D(\tau_x)]^{-1} \mathbb{E}X  = 0 
\\
x^\intercal  [\widehat{D}_h(\widehat{\tau}_{x,h})]^{-1}\widehat{D}^{(1)}(\widehat{\tau}_{x,h})[D(\widehat{\tau}_{x,h})]^{-1} \bar{X} = 0
\end{split}
\end{align*}

Now, by a Taylor expansion with Lagrange remainder, we have
\begin{equation*}
    s_x(\widehat{\tau}_{x,h}) = s_x(\tau_x) + s^{(1)}_x(\tau_x)[\widehat{\tau}_{x,h} - \tau_x] + \frac{1}{2}s^{(2)}_x(\tau_x^*)[\widehat{\tau}_{x,h}-\tau_x]^2    
\end{equation*}
with $\tau^{*}_x$ as a point between $\widehat{\tau}_{x,h}$ and $\tau_x$. Assumptions A2, A6, and A7 ensure that $\tau\mapsto s_x(\tau)$ is strictly convex in a vicinity of $\tau_x$, so $\inf_\tau s^{(2)}(\tau) > 0$ in such a vicinity.  

Applying the first-order condition $s_x^{(1)}(\tau_x)=0$, we can rewrite the expansion as:
\begin{equation} \label{taylor_exp}
    s_x(\widehat{\tau}_{x,h}) = s_x(\tau_x) + \frac{1}{2}s^{(2)}_x(\tau_x^*)[\widehat{\tau}_{x,h}-\tau_x]^2,
\end{equation}
which leads to 
\begin{equation*}
    \vert \widehat{\tau}_{x,h} - \tau_x \vert = \sqrt{2}
\sqrt{\frac{\big|
s_x(\widehat{\tau}_{x,h}) - s_x(\tau_x) \big|
}{\vert s^{(2)}_x(\tau_x^*)\vert}}
=
\sqrt{\frac{o(h) + O_P\Big(\sqrt{\log n / (nh)}\Big)}
{O_P(1)}}
\end{equation*}

Using $\sqrt{a+b} \le \sqrt{a} + \sqrt{b}$, it yields our rate of convergence for $\widehat{\tau}_{x,h}$:
\begin{equation} 
    \vert \widehat{\tau}_{x,h} - \tau_x \vert =
    o(h^{1/2}) + O_P\Bigg( \Big( \frac{\log n}{nh} \Big)^{1/4} \Bigg)
\end{equation}
as stated.
\end{proof}

Now that the consistency for the quantile modes is proved and the rates of convergence are defined, we proceed to state the consistency for the estimator of the mode, $\widehat{m}_h(x)$. Our second theorem is constructed using previous results from this paper and from \citet*{fernandes_guerre_horta_2021}. 

\begin{theorem} \label{theorem_cons_m}
    If Assumptions A1 to A7 hold, then
\begin{equation} \label{m_rate}
        \widehat{m}_h(x) = m(x) + o(h^{1/2}) + O_P \Bigg( \Big( 
        \frac{\log n}{nh}\Big)^{1/4}\Bigg)
    \end{equation}
uniformly for \(x\in\mathcal{X}\) and \(h\in[\underline{h}_n,\bar{h}_n]\).

\end{theorem}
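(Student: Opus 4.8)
The plan is to decompose the estimation error by inserting the intermediate term $x^\intercal\beta(\widehat{\tau}_{x,h})$, which separates the quantile-regression estimation error from the plug-in error:
\begin{align*}
\widehat{m}_h(x) - m(x) = \underbrace{x^\intercal\big[\widehat{\beta}_h(\widehat{\tau}_{x,h}) - \beta(\widehat{\tau}_{x,h})\big]}_{\text{(I)}} \ +\ \underbrace{x^\intercal\big[\beta(\widehat{\tau}_{x,h}) - \beta(\tau_x)\big]}_{\text{(II)}}.
\end{align*}
I would then bound the two pieces separately and show that (II) carries the stated rate while (I) is of strictly smaller order, so that the total matches the rate of Theorem~\ref{theorem_cons_tau}.

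For term (I), observe that $\widehat{\tau}_{x,h}\in[\alpha,1-\alpha]$ by construction, so I would invoke the uniform (in $\tau\in[\alpha,1-\alpha]$ and $h\in[\underline{h}_n,\bar{h}_n]$) consistency of the SQRE from \citet*{fernandes_guerre_horta_2021}, of the form $\sup_{\tau,h}\Vert\widehat{\beta}_h(\tau)-\beta(\tau)\Vert = o(h) + O_P(\sqrt{\log n/n})$. Since this bound is uniform in $\tau$, it applies at the data-dependent argument $\widehat{\tau}_{x,h}$; combined with $\sup_{x\in\mathcal{X}}\Vert x\Vert = O(1)$ from A1, term (I) is dominated by the target rate, because $(\log n/(nh))^{1/4}$ dominates both $\sqrt{\log n/(nh)}$ and $\sqrt{\log n/n}$ when $nh\to\infty$, and $o(h)$ is absorbed into $o(h^{1/2})$ as $h\to0$.

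The heart of the argument is term (II). Using A2, which guarantees $\beta$ is three-times continuously differentiable, I would Taylor-expand with Lagrange remainder about $\tau_x$,
\begin{align*}
x^\intercal\big[\beta(\widehat{\tau}_{x,h})-\beta(\tau_x)\big] = x^\intercal\beta^{(1)}(\tau_x)\,(\widehat{\tau}_{x,h}-\tau_x) + \tfrac{1}{2}\,x^\intercal\beta^{(2)}(\widetilde{\tau})\,(\widehat{\tau}_{x,h}-\tau_x)^2,
\end{align*}
where $\widetilde{\tau}$ lies between $\widehat{\tau}_{x,h}$ and $\tau_x$. The crucial point is that, unlike the sparsity function at its maximizer, the quantile map $Q(\cdot|x)=x^\intercal\beta(\cdot)$ is \emph{not} flat at $\tau_x$: its slope is $x^\intercal\beta^{(1)}(\tau_x)=q(\tau_x|x)=1/f(m(x)|x)$, which by A1 and A3 is strictly positive and uniformly bounded over $\mathcal{X}$. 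Hence the first-order term inherits exactly the rate of $\widehat{\tau}_{x,h}-\tau_x$ from Theorem~\ref{theorem_cons_tau}, namely $o(h^{1/2})+O_P((\log n/(nh))^{1/4})$. Because A2 together with compactness of $[\alpha,1-\alpha]$ makes $\beta^{(2)}$ uniformly bounded there, and because $\widehat{\tau}_{x,h}\in[\alpha,1-\alpha]$ forces $\widetilde{\tau}\in[\alpha,1-\alpha]$, the remainder is $O_P(|\widehat{\tau}_{x,h}-\tau_x|^2)=o(h)+O_P(\sqrt{\log n/(nh)})$, strictly dominated by the first-order term. Collecting (I) and (II) then yields the claimed uniform rate.

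The main obstacle I anticipate is one of uniformity bookkeeping rather than any new estimate: I must verify that (a) the \citet*{fernandes_guerre_horta_2021} bound for (I) genuinely holds uniformly at the random argument $\widehat{\tau}_{x,h}$ and across the bandwidth interval, and (b) the Lagrange point $\widetilde{\tau}$ stays in the compact set on which $\beta^{(1)},\beta^{(2)}$ are controlled, uniformly in $x$ and $h$. Both follow from A7, which confines $\tau_x$ (and hence the argmax range $[\alpha,1-\alpha]$) away from $0$ and $1$, together with the consistency part of Theorem~\ref{theorem_cons_tau}; so no difficulty arises beyond propagating the "$\sup_{x,h}$" through each step and confirming the dominance ordering of the error terms.
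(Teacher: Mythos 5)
Your proposal is correct and follows essentially the same route as the paper's proof: the same decomposition through the intermediate term $x^\intercal\beta(\widehat{\tau}_{x,h})$, the same appeal to the uniform-in-$(\tau,h)$ bound on $\widehat{\beta}_h-\beta$ from \citet*{fernandes_guerre_horta_2021} for the first piece, and control of the second piece via the rate of Theorem~\ref{theorem_cons_tau} (the paper simply invokes Lipschitz continuity of $\beta$ from A2 where you carry out a one-term Taylor expansion, which is an equivalent bound for this purpose).
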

\begin{proof}
     From Theorem~\ref{theorem_cons_tau} we know that $\widehat{\tau}_{x,h} \xrightarrow{p} \tau_x$ at a rate of $o(h^{1/2}) + O_P ( {\log n}/{nh} )^{1/4}$. Also, from Theorems 1 and 2 in \citet*{fernandes_guerre_horta_2021} we have:
    \begin{equation} \label{rate_beta}
    \big\Vert \widehat{\beta}_h(\tau) - \beta(\tau) \big\Vert = O_P\left( \frac{1}{\sqrt{n}} + h^2 \right)    
    \end{equation}
    
Recalling that
    \begin{equation*}
        m(x) = x^\intercal \beta(\tau_x) 
        \quad \mathrm{and} \quad
        \widehat{m}_h(x) = x ^\intercal \widehat{\beta}_h(\widehat{\tau}_{x,h})
    \end{equation*}
we obtain
    \begin{align} \label{mhat-m}
    \begin{split}
        \vert \widehat{m}_h(x) - m(x) \vert &= 
        \Big| x ^\intercal \widehat{\beta}_h(\widehat{\tau}_{x,h}) -
        x^\intercal \beta(\tau_x) \Big| 
        \\
        &= \Big| x^\intercal \big( \widehat{\beta}_h(\widehat{\tau}_{x,h}) - \beta(\tau_x) \big) \Big| \le
        \Vert x \Vert \cdot \big\Vert \widehat{\beta}_h(\widehat{\tau}_{x,h}) - \beta(\tau_x) \big\Vert.
    \end{split}
    \end{align} 
    Given the differentiability condition (A2), we have that $\beta(\tau)$ is Lipschitz-continuous, thus, for some constant $C>0$, we have
    \begin{align} \label{lipschitz_cont_beta}
    \begin{split}
    \Vert \widehat{\beta}_h(\widehat{\tau}_{x,h}) - \beta(\tau_x)\Vert &\le 
    \Vert \widehat{\beta}_h(\widehat{\tau}_{x,h}) - \beta(\widehat{\tau}_{x,h}) \Vert + \Vert \beta(\widehat{\tau}_{x,h}) - \beta(\tau_x) \Vert 
    \\
    &\le(\sup\nolimits_\tau \Vert \widehat{\beta}_h(\tau) - \beta(\tau) \Vert)
    + C \Vert \widehat{\tau}_{x,h} - \tau_x\Vert 
    \\
    &\le O_p \left(\frac{1}{\sqrt{n}}+h^2\right) 
    + C \Biggl[
    o(h^{1/2}) + O_P\Bigg(\Big(\frac{\log n}{nh}\Big)^{1/4}\Bigg)
    \Biggr],
    \end{split}    
    \end{align}
which yields \eqref{m_rate}.
\end{proof}

\begin{remark}
Denote our rate of convergence from Theorem~\ref{theorem_cons_m} as $R_{CMR}$ and the rate from the estimator proposed by \citet*{Zhang_Kato_Ruppert_2021} as $R_{ZKR}$,
\begin{align}
\begin{split} \label{conv_rates}
    R_{CMR} &= O_P \Big( n^{-1/4}h^{-1/4}(\log n)^{1/4} \Big) + o(h^{1/2})
    \\
    R_{ZKR} &= O_P \Big( n^{-1/2}h^{-3/2} (\log n)^{1/2} + h^2 \Big)
\end{split}    
\end{align}
Neither rate is dimension dependable, thus free from the "curse of dimensionality"; under certain conditions on $h$, our rate, $R_{CMR}$, is marginally slower than $R_{ZKR}$. Apart from the presence of the deterministic term, both rates are similar, and the difference lies on the selection of the bandwidth parameter, $h$. Given a certain bandwidth, $R_{ZKR}$ can achieve, at best, a rate of $(n\log n)^{-2/7}$, similar to the rate of \citet{Kemp_Silva_2012} and faster than the rate in \citet*{ota_kato_hara_2019}. Despite the differences in convergence rates, we attain uniformity with respect not only to the design points of the covariates, but also to the bandwidth.

\end{remark}

\begin{remark} 
Rewrite $R_{ZKR}$ in \eqref{conv_rates} as
\begin{equation}
O_P\left( \left[\frac{\log n}{n h^3}\right]^{1/2} + h^2\right).
\end{equation}
The ratio
\begin{equation}\label{eq:rates-ratio}
\frac{(\log n/(nh))^{1/4}}{(\log n/(nh^3))^{1/2}} = \frac{n h^5}{\log n}
\end{equation}
diverges to infinity under Assumption (viii) in \citet*{Zhang_Kato_Ruppert_2021}, so under this assumption our estimator cannot achieve $O_P$ rates faster than theirs. Nevertheless, under our weaker Assumption A5, we can make \eqref{eq:rates-ratio} go to zero, for example by taking $h = (n/\log n)^{-1/5}b$ with $b\to0$ and $b^3(n/\log n)^{-2/5}\to\infty$. However, this particular choice for the bandwidth is not contemplated due to \citeauthor*{Zhang_Kato_Ruppert_2021}'s \citeyearpar{Zhang_Kato_Ruppert_2021} assumptions, but is enabled by our condition A5. 

\end{remark}

\begin{remark}
Importantly, our estimator $\widehat{m}_h(x)$ attains the rate in Theorem~\ref{theorem_cons_m} uniformly both in $x$ and $h$; on the other hand, the representation in Proposition~1 of \citet*{Zhang_Kato_Ruppert_2021} is not uniform for the bandwidth. Obtaining uniformity in $h$ can be useful for 3 types of bandwidth choices: (i) data-driven bandwidth choices, as in \citet*{fernandes_guerre_horta_2021}; (ii) adaptive bandwidth choices, such as the ones of \citet{terrell_scott_1992, Lepski_etal_1997}; and (iii) choices robust to bandwidth-snooping, as in \citet{armstrong_kolesar_2018}.

\end{remark}


\section{Monte Carlo Study} \label{Simulation}
In this section we employ a preliminary\footnote{For a more comprehensive Monte Carlo Study, considering different types of error terms, we refer the reader to the work of \citet{ongaratto_horta_2021}.} version of a Monte Carlo study in order to illustrate if our estimator converges to a normal distribution in finite samples. All the analyses in this section were developed using \textbf{\textsf{R}} language, version 4.3.1 \citep{R_software} via the software \textbf{\textsf{RStudio}} \citep{R_studio}. The estimation of $\widehat{\beta}_h(\tau)$, as in \eqref{SQRE}, was done via the package \textbf{\textsf{conquer}} from \citet{conquer}.

\subsection{Simulation Design}
In our simulation, we consider a type of heteroscedastic error distribution, similar to \citet*{fernandes_guerre_horta_2021}. Our pseudorandom data is generated from the model $Y = X^\intercal \beta + \epsilon$, where $X = (1, \Tilde{X})^\intercal$, with $\Tilde{X} \sim U(1,5)$, $\beta = (1,1)$ and the conditional heteroscedastic error term $\epsilon = \frac{1}{2} (1+\Tilde{X})Z$, with $Z$ following a Skew Normal distribution with zero mean, unit variance and parameter $\alpha=2$. As for the design point of the covariates, we condition $X=x$ with $x=(1, 3)^\intercal$. The simulations require $\tau$ values, so we consider a grid, $\mathcal{T} = \{ 0.01, 0.02, ..., 0.98, 0.99\}$. We contemplated 3 different samples sizes, $n = \{ 100, 250, 500\}$, each for the number of replications, $it=\{100, 1000\}$.

\subsection{Bandwidth Selection}
It is necessary to choose the bandwidth parameter $h$ in order to estimate $\widehat{m}_h(x)$ as in \eqref{mode_asQ}. In line with \citet*{fernandes_guerre_horta_2021}, we employ the \textit{Silverman's rule-of-thumb bandwidth} \citep{silverman_1986}. We consider a $\tau$-dependent, plug-in rule-of-thumb bandwidth which can be expressed as:
\begin{equation} \label{hrot}
    h_{ROT}(\tau) = \frac{1.06}{\sqrt[5]{n}} \widehat{S}(\tau), \quad \tau \in \mathcal{T}
\end{equation}
where, for each quantile level $\tau \in \mathcal{T}$, there is a different value of $\widehat{S}(\tau)$, which is computed according to the following algorithm:
\begin{enumerate}
    \item Calculate the residuals for the canonical quantile regression estimator $\beta(\tau)$ as in $e_i(\tau):= Y_i - X_i^\intercal (\widehat{\beta}(\tau)$ for $i \in \{1,...,n\}$;
    \item The sample interquartile range $iq(\tau)$ is calculated, corresponding to the residuals $\widehat{e}_1(\tau),...,\widehat{e}_n(\tau)$;
    \item The sample standard deviation $\widehat{\sigma}(\tau)$, from $\widehat{e}_1(\tau),...,\widehat{e}_n(\tau)$, is calculated;
    \item Lastly, $\widehat{S}$ is given by the $\min{0.7199528 \times iq(\tau); \sigma(\tau)}$.
\end{enumerate}

Since the bandwidth $h_{ROT}$ is data-driven, it depends on the generated sample, thus it can vary with each replication. Also, the bandwidth is $\tau$-dependent, therefore, there is a different smoothing parameter for each value of $\tau \in \mathcal{T}$.

\subsection{Limiting Distribution of the Estimator}
Since no distribution is achieved in our asymptotic theory, we rely on simulations in finite sample sizes to analyze the behavior of the \textit{Convolution Mode Regression Estimator}, specifically, its limiting distribution. In order to deduce if our proposed estimator is normally distributed, we analyze the quantile-quantile (Q-Q) plots for the estimated $\widehat{m}(x)$ against the theoretical quantiles of a Normal Distribution.

\vfill

\begin{figure}
\caption{Q-Q plot for $\widehat{m}(x)_{h=ROT}$ sample quantiles (black) against Normal quantiles (red)}
\centering
\includegraphics[width=0.75\textwidth]{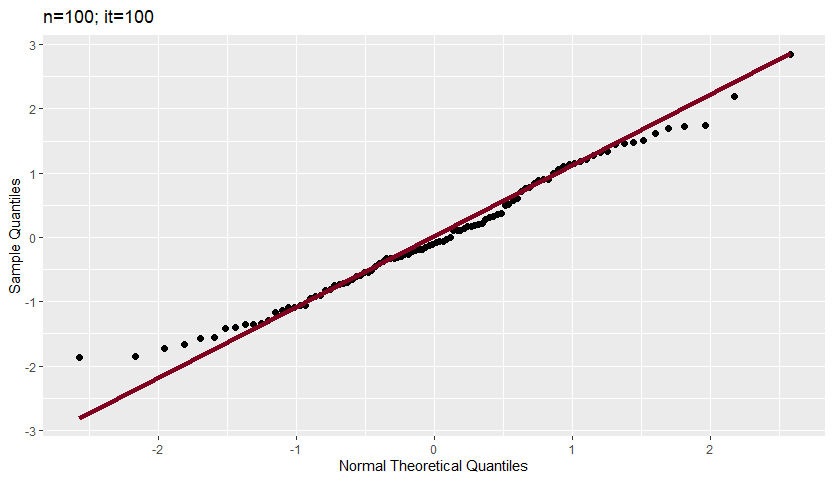}
\label{n100_it100}

\vspace{0.5cm}
\caption{Q-Q plot for $\widehat{m}(x)_{h=ROT}$ sample quantiles (black) against Normal quantiles (red)}
\centering
\includegraphics[width=0.75\textwidth]{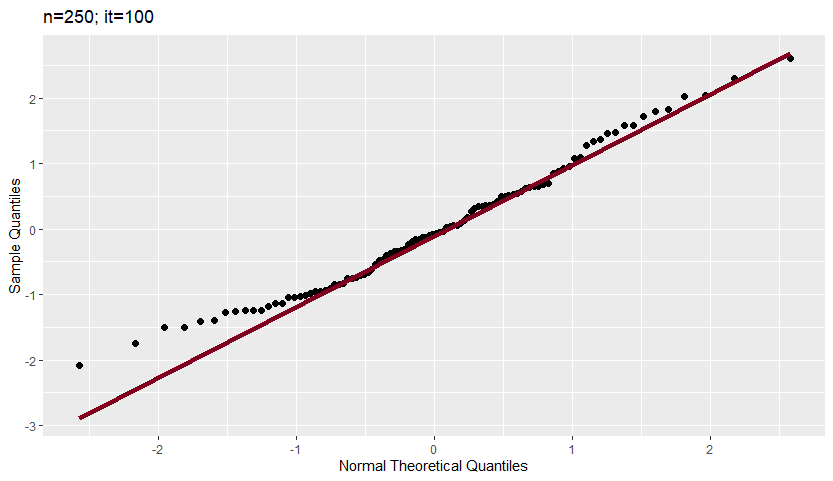}
\label{n250_it100}

\vspace{0.5cm}
\caption{Q-Q plot for $\widehat{m}(x)_{h=ROT}$ sample quantiles (black) against Normal quantiles (red)}
\centering
\includegraphics[width=0.75\textwidth]{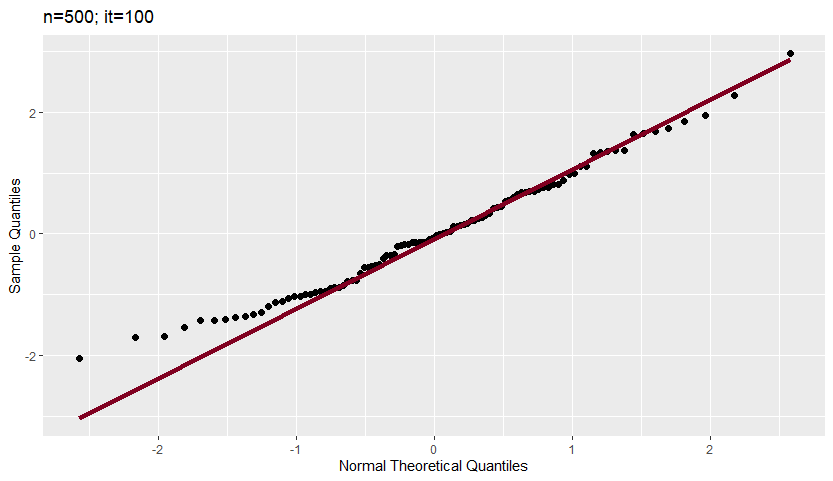}
\label{n500_it100}
\end{figure}

\begin{figure}
\caption{Q-Q plot for $\widehat{m}(x)_{h=ROT}$ sample quantiles (black) against Normal quantiles (red)}
\centering
\includegraphics[width=0.75\textwidth]{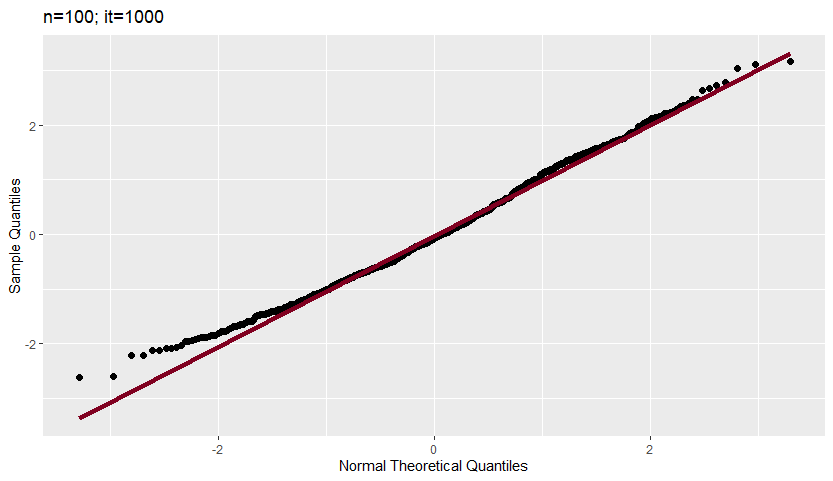}
\label{n100_it1000}

\vspace{0.5cm}
\caption{Q-Q plot for $\widehat{m}(x)_{h=ROT}$ sample quantiles (black) against Normal quantiles (red)}
\centering
\includegraphics[width=0.75\textwidth]{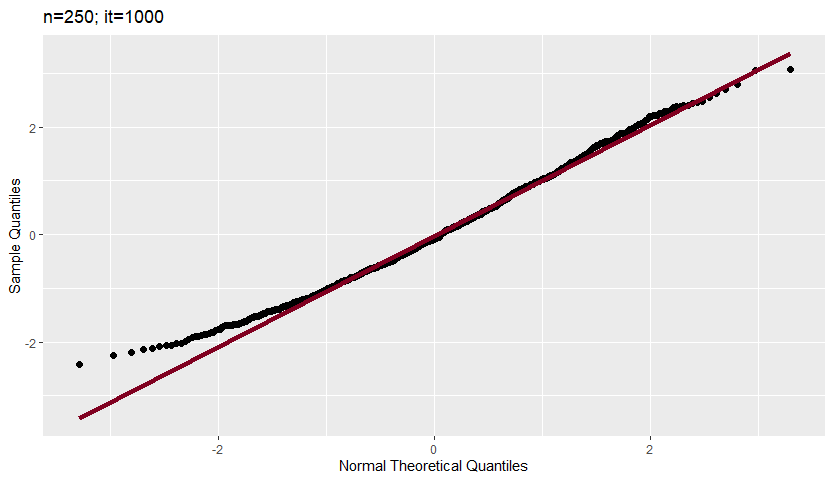}
\label{n250_it1000}

\vspace{0.5cm}
\caption{Q-Q plot for $\widehat{m}(x)_{h=ROT}$ sample quantiles (black) against Normal quantiles (red)}
\centering
\includegraphics[width=0.75\textwidth]{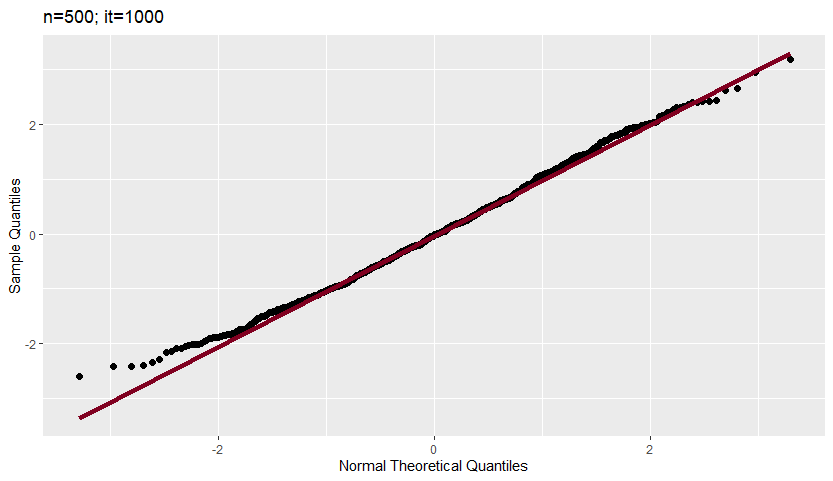}
\label{n500_it1000}
\end{figure}

According to Figures \ref{n100_it100}-\ref{n500_it100} there is some evidence that our estimator is normally distributed, for the referred sample sizes. The generated sample points show some adherence to the 45° degree line, which represents the theoretical quantiles of a Normal Distribution. The “stickiness” of the points to the line seems to be weaker on the lower tail for all sample sizes with 100 replications. When we increase the number of replications of the simulation for $it=1000$ (Figures \ref{n100_it1000} and \ref{n250_it1000}), we get similar results to the previous ones, the generated quantile points of $\widehat{m}_{h_{ROT}}(x)$ overlap the red line, except in the lower limits of the samples. The last simulated example, in Figure \ref{n500_it1000} ($n=500$ and $it=1000$), shows a better correspondence of the simulated points of the lower tail to the theoretical pathway. In short, the different simulated scenarios suggest that for finite samples, our estimator gets closer to being normally distributed. 



\section{Concluding Remarks} \label{Concluding Remarks}

In the present paper we developed a novel estimator for the conditional mode $\widehat{m}_h(x)$, called \textit{Convolution Mode Regression}, which is based on inverting a convolution-type smoothed quantile regression. The idea of achieving the conditional mode via quantile regression is not groundbreaking, since it has been done previously \citep*{ota_kato_hara_2019, Zhang_Kato_Ruppert_2021}. Despite that, it presents advantages regarding the two main problems with mode regression, slow convergence in nonparametric settings and nonconvex optimization in linear environments. Since the mode is the maximizer of the density and the density can be retrieved from the qdf, our estimation strategy relies on an intermediate step. Specifically, to estimate $\widehat{m}_h(x)$, we first estimate the conditional quantile function $Q(\cdot|x)$, followed by the conditional quantile mode of $Y$ given $X=x$, denoted $\tau_x$. The qdf estimator we rely on, from \citet*{fernandes_guerre_horta_2021}, makes use of the linear structure of QR and also converges to the real pdf curve.

Differently from the existing work of \citet*{Zhang_Kato_Ruppert_2021}, who initially estimate the quantile regression then smooth it through a kernel, our approach relies on "smooth then estimate". We develop asymptotic consistency for our estimator, obtaining some differences in convergence rates, as compared to \citet*{Zhang_Kato_Ruppert_2021}. Apart from the initial smoothing, the main differentiation of our model from the authors' is in the bandwidth selection premise, since our assumption for the choice of $h$ is less restrictive. Furthermore,  the uniformity of $\widehat{m}_h(x)$ with respect to $h$, something not obtained in \citet*{Zhang_Kato_Ruppert_2021}, makes our model an interesting choice when the bandwidth selection is data-driven or adaptive. Our preliminary simulations show, that for finite sample sizes, our estimator behaves similar to a Normal Distribution.

Further work related to present research can take many directions. In what we assess as more important, the continuation of the asymptotic properties, namely, the limiting distribution of the estimator. Furthermore, to refine the Monte Carlo Study, more in line with \citet{ongaratto_horta_2021}, in order to compare the performance of the \textit{Convolution Mode Regression Estimator} against \citeauthor*{Zhang_Kato_Ruppert_2021}'s. In accordance to that, implementations of our estimator in previous applied econometric  and predictive problems can be carried out; besides to a generalization of the present framework for time series, focused on forecasting models for asymmetric data.

\appendix

\section{Appendix}

\subsection{First and Second Derivatives of $\widehat{R}_h(b;\tau)$}
\label{Appendix 0}

From \citet*{fernandes_guerre_horta_2021}, the first and second derivatives of the smoothed sample objective function, $\widehat{R}_h(b;\tau)$, with respect to $b$, are, respectively:

\begin{align} \label{derivates_of_Rh}
\begin{split}
\widehat{R}_h^{(1)}(b;\tau) &= \frac{1}{n} \sum_{i=1}^{n}X_i \Biggl[ K \Biggl( \frac{X_i^\intercal b - Y_i}{h} \Biggr) - \tau 
\Biggr]
\\
\widehat{R}_h^{(2)}(b;\tau) &= \frac{1}{n}\sum_{i=1}^n X_i X_i^\intercal k_h(X_i^\intercal b - Y_i)
\end{split}
\end{align}
with $K(t) := \int_{-\infty}^{t} k(v)dv$. 

\subsection{Derivation of $D^{(1)}(\tau)$} \label{Appendix 1}

Recalling the definition of $D(\tau)$:
\begin{equation*}
    D(\tau):= R^{(2)}(\beta(\tau);\tau) = \E [XX^\intercal f(X^\intercal \beta(\tau)|X)]
\end{equation*}
The first order differentiation is expressed as:
\begin{align} \label{d_first_deriv}
\begin{split}
D^{(1)}(\tau) &:= \frac{\partial}{\partial \tau} \mathbb{E}[ XX^\intercal f(X^\intercal \beta(\tau)|X)]
\\
&= \mathbb{E}[XX ^\intercal f^{(1)}(X ^\intercal \beta(\tau)|X) \cdot X^\intercal \beta^{(1)}(\tau)]
\\
&= \mathbb{E}[XX ^\intercal f^{(1)}(X ^\intercal \beta(\tau)|X) \cdot X^\intercal [D(\tau)]^{-1}\mathbb{E}X]
\\
D^{(1)}(\tau) &= \mathbb{E}\Bigl[XX ^\intercal f^{(1)}(X ^\intercal \beta(\tau)|X) \cdot X^\intercal \Bigl\{ \mathbb{E}[XX^\intercal f(X^\intercal \beta(\tau)|X)] 
\Bigr\}^{-1}\mathbb{E}X\Bigr]
\end{split}
\end{align}

\subsection{Derivation of $s_x(\tau)$} \label{Appendix 2}
Recalling the definition of the population sparsity function:
\begin{equation*}
    s_x(\tau) = -\frac{\partial}{\partial\tau} x^\intercal\beta(\tau) =  -x^\intercal \underbrace{[D(\tau)]^{-1} \mathbb{E}X}_{\beta^{(1)}(\tau)}
\end{equation*}
To calculate the first derivative of $s_x(\tau)$, we use the definition of $\beta(\tau)$ as in the previous equation:
\begin{equation} \label{s_beta}
    s^{(1)}_x(\tau) : = \frac{\partial s_x(\tau)}{\partial \tau} = \frac{\partial}{\partial \tau} -x^\intercal \beta^{(1)}(\tau) = -x^\intercal \beta^{(2)}(\tau)
\end{equation}
Now, computing $\beta^{(2)}(\tau)$:
\begin{align} \label{beta2_def}
\begin{split}
\beta^{(2)}(\tau) &:= \frac{\partial \beta^{(1)}(\tau)}{\partial \tau} = \frac{\partial}{\partial \tau} [D(\tau)]^{-1} \mathbb{E}X
\\
&= -D^{(1)}(\tau)[D(\tau)]^{-2} \mathbb{E}X
\\
&=-D^{(1)}(\tau) [D(\tau)]^{-1}\beta^{(1)}(\tau)
\\
&= -[D(\tau)]^{-1}D^{(1)}(\tau)[D(\tau)]^{-1} \mathbb{E}X
\end{split}    
\end{align}
Applying the result in \eqref{beta2_def} to equation \eqref{s_beta} we get $s_x^{(1)}(\tau)$:
\begin{equation} \label{s_first_deriv}
    s_x^{(1)}(\tau) = x^\intercal \Bigl[ [D(\tau)]^{-1}D^{(1)}(\tau)[D(\tau)]^{-1} \mathbb{E}X 
\Bigr]
\end{equation}
with $D^{(1)}(\tau)$ defined as in equation \eqref{d_first_deriv}.

The second derivative of $s_x(\tau)$ is required in the Taylor Expansion \eqref{taylor_exp}, so we compute $s^{(2)}(\tau)$ as follows:
\begin{align}
\begin{split} \label{s_second_deriv}
s^{(2)}(\tau) &:= \frac{\partial s^{(1)}(\tau)}{\partial \tau} = \frac{\partial}{\partial \tau} 
x^\intercal [D(\tau)]^{-1} D^{(1)}(\tau) [D(\tau)]^{-1} \mathbb{E}X
\\
s^{(2)}(\tau) &= x^\intercal \Biggl[ 
\Bigl( 
[D(\tau)]^{-1}D^{(2)} - 2[D^{(1)}(\tau)]^2
\Bigr)
[D(\tau)]^{-3}
\Biggr] 
\mathbb{E}X
\end{split}
\end{align}

\bibliographystyle{chicago}
\bibliography{refs}

\end{document}